\newcommand{\keywords}[1]{\par\addvspace\baselineskip
\noindent\keywordname\enspace\ignorespaces#1}
\begin{document}
\mainmatter  

\hyphenation{multi-di-men-sion-al}

\title{$k$-neighborhood for Cellular Automata}
\titlerunning{$k$-neighborhood for Cellular Automata}

\author{Dmitry Zaitsev\thanks{Thanks to Mike Stannett for his help in improving the readability of this paper}}

\authorrunning{Zaitsev D.A.}

\institute{Faculty of Engineering \\
Vistula University \\ Stoklosy 3, 02-787 Warszawa, Poland\\
\mailz}

\toctitle{Lecture Notes in Computer Science}
\tocauthor{D. Zaitsev}
\maketitle

\begin{abstract}
A neighborhood for $d$-dimensional cellular automata is introduced that spans the range from von Neumann's to Moore's neighborhood using a parameter which represents the dimension of hypercubes connecting neighboring cells. The neighborhood is extended to include a concept of radius. The number of neighbors is calculated. For diamond-shaped neighborhoods, a sequence is obtained whose partial sums equal Delannoy numbers.
\keywords{cellular automata, $k$-neighborhood, von Neumann's neighborhood, Moore's neighborhood, Chebyshev distance, Manhattan distance, Delannoy numbers}
\end{abstract}

\section{Introduction}

Cellular automata \cite{Kari} are defined over certain lattices where cells change their states in either a synchronous or an asynchronous way depending on a set of local rules which specify a mapping from the state of each cell's neighborhood at the current step into a cell state at the next step. In the present paper we focus on the specification and enumeration of cell neighbors.

In cellular automata theory \cite{Kari}, two kinds of cell neighborhood, von Neumann's and Moore's, are usually considered for two dimensional space. These are then generalized to multidimensional space, and extended for radius greater than one.  

However, for multidimensional space, von Neumann's neighborhood can generate too sparse a topology, while Moore's is too dense. Our idea consists in introducing an adjustable parameter which allows us to span between these two neighborhoods. The parameter represents the dimension of hypercubes connecting neighboring cells. ``The On-Line Encyclopedia of Integer Sequences'' \cite{OEIS} has recently approved two new sequences (OEIS A265014) and (OEIS A266213) studied in the present paper and implemented with software \cite{Z15hmn}.

\section{Basic Notions}

Let us consider an infinite integer lattice of dimension $d$. Nodes of this lattice have coordinates $\vec{i}=(i_1,i_2,...,i_d)$, $i_j{\in}\mathbb{Z}$, $1{\leq}j{\leq}d$. According to the terminology of cellular automata \cite{Kari}, we call these nodes \textit{cells}, denoted  $c_{\vec{i}}$, and consider each of them to be a unit-size $d$-hypercube with its center situated with coordinates $\vec{i}$. To study neighborhoods of a cell systematically, we recall the definitions of the following distances in multidimensional space \cite{Suther}:

\begin{itemize}
\item
Minkowsky distance \cite{Minkowsky}: 
    $L^{p}(\vec{i^\prime},\vec{i})
    =  (\sum\limits_{j}\vert\vec{i'}-\vec{i}\vert)^{p})^{1/p}$.
\item
Manhattan distance \cite{Suther}: 
    $L^{1}(\vec{i'},\vec{i})
		=  \sum\limits_{j}(\vert\vec{i'}-\vec{i}\vert)$.
\item
Chebyshev distance \cite{Chebyshev}: 
    $L^{\infty}(\vec{i^\prime},\vec{i})
		=  \max\limits_{j}(\vert\vec{i'}-\vec{i}\vert)$.
\end{itemize}

Using these definitions of the above distances, we can characterize Moore's neighborhood \cite{Moore} of a cell $c_{\vec{i}}$ as the set of cells which are situated at Chebyshev distance 1, and von Neumann's neighborhood \cite{Neumann} as the set of cells which are situated at Manhattan distance 1, from $c_{\vec{i}}$. For the 2-dimensional case, two neighborhoods are illustrated in Figure~\ref{fig-nbh}. 
					
\begin{figure}
\center
\includegraphics[width=0.5\textwidth]{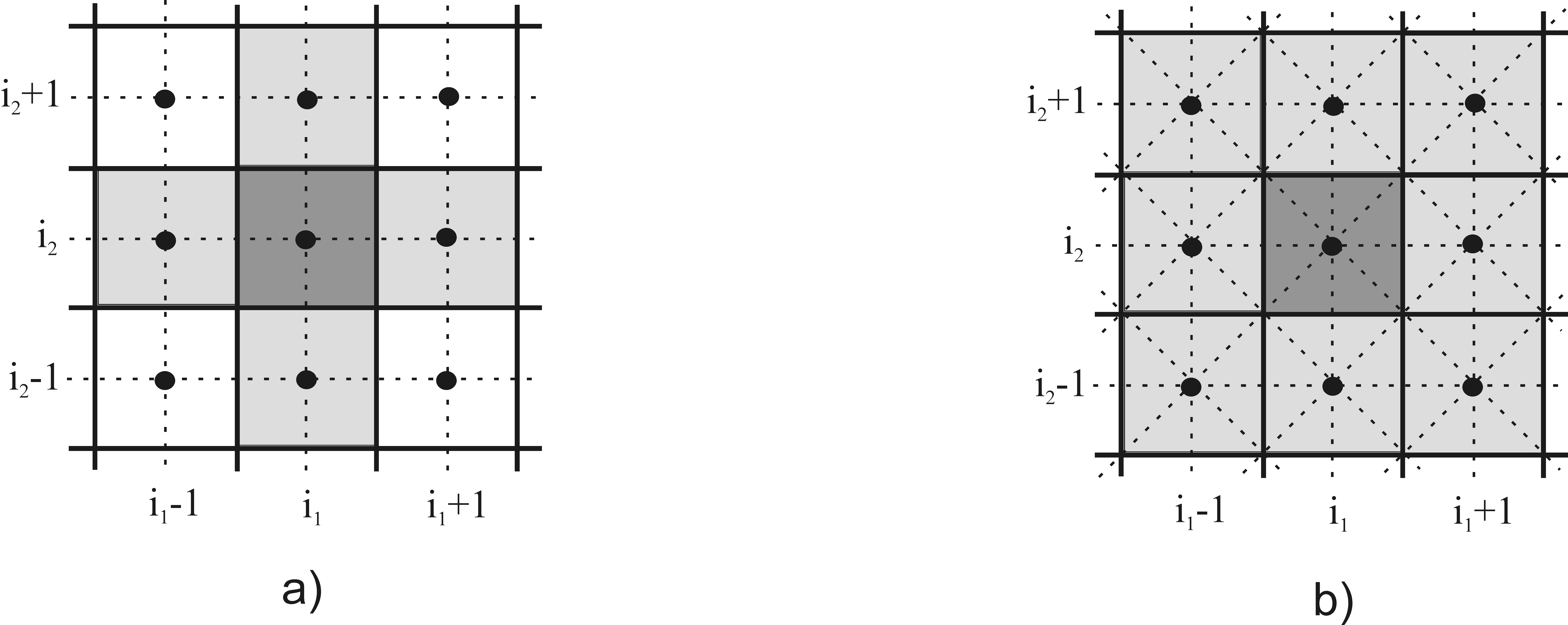}
\caption{Classical neighborhoods (2-dimensional case): a) von Neumann's neighborhood; b) Moore's neighborhood.}
\label{fig-nbh}
\end{figure}															

Note that the facets of a finite $d$-dimensional hypercube include $2d$ facets which are $(d-1)$-dimen\-sional hypercubes, each of them includes $2(d-1)$ facets which are $(d-2)$-dimensional hypercubes and so on; finally, there are $2^d$ $0$-dimensional hypercubes (i.e. vertices).  In von Neumann's neighborhood, cells are connected only via facets which are $(d-1)$-dimensional hypercubes while in Moore's neighborhood, cells are connected via bounds which are $(d-j)$-dimensional hypercubes, $1{\leq}j{\leq}d$.

\section{$k$-neighborhood}

Let us consider an infinite $d$-dimensional lattice. In von Neumann's neighborhood, the neighboring cells are situated at a Manhattan distance of 1. The number of neighbors is $2d$ and given by the sequence (OEIS A005843). This is clear if we consider the cell coordinate difference $\Delta\vec{i}=\vec{i'}-\vec{i}$, since this difference vector can contain only one nonzero element (belonging to the set  $\lbrace{-1,1}\rbrace$, which consists of two elements), and any one of the $d$ coordinates can be chosen to take this value. 

In Moore's neighborhood, the neighboring cells are situated at Chebyshev distance of 1. The number of neighbors is calculated as $(3^d-1)$ and given by the sequence (OEIS A024023). Indeed, when we consider the cell coordinate difference $\Delta\vec{i}$, its elements give precisely the set $\lbrace{-1,0,1}\rbrace^d$ (with $3^d$ elements), except that the vector having all coordinates equal to $0$ is excluded. 

\textbf{Definition 1.} A \textit{sharp $k$-neighborhood} is a set of cells having difference of either $-$1 or 1 in exactly $k$ coordinates with respect to the current cell. 

Thus, we consider neighbors connected via $(d-k)$-cube bounds of the unit-size $d$-hypercube which represents a cell. In other words, neighbors are situated at a Chebyshev distance of 1 restricted by Manhattan distance equal to $k$. We denote this neighborhood by $S(d,k)$. Noticing that only $k$ coordinates of the difference vector $\Delta\vec{i}$ (which can be chosen in $C_d^k$ ways) are nonzero, and must belong to the set $\lbrace{-1,1}\rbrace$ consisting of two elements, gives us the following formula 
\begin{equation}\label{eq-ns}
\hat{K}(d,k)=\vert{S(d,k)}\vert=2^k{C_d^k}
\end{equation}
represented by sequence (OEIS A013609). Note that, $S(d,1)$ coincides with von Neumann's neighborhood as far as $C_d^1=d$ and a union over $k$ gives us Moore's neighborhood with $\sum_{j=1}^{d} {\hat{K}(d,j)}=3^d-1$.  The diagonal numbers equal $\hat{K}(d,d)=2^d$. For instance, in the 3-dimensional case, we have 6 sharp 1-neighbors connected via 2-cube bounds (facets or squares), 12 sharp 2-neighbors connected via 1-cube bounds (sides), and 8 sharp 3-neighbors connected via 0-cube bounds (vertices).

\textbf{Definition 2.} A \textit{$k$-neighborhood} is a set of cells having difference of either $-$1 or 1 in $j$ coordinates, $1 \leq j \leq k$, with respect to the current cell.

It directly follows from the definition that the number of neighbors in a $k$-neighborhood can be calculated as:
\begin{equation}\label{eq-ng}
K(d,k)=\vert{G(d,k)}\vert=\sum\limits_{j=1}^{k} \hat{K}(d,k)=\sum\limits_{j=1}^{k} 2^{j}C_d^j
\end{equation}  
represented by a new sequence (OEIS A265014). 
Thus, a $k$-neighborhood connects cells via cubes of dimension $(d-j)$, where $1\leq{j}\leq{k}\leq{d}$. In other words, neighboring cells are situated at Chebyshev distance 1, restricted by Manhattan distance less than or equal to $k$. Of the various $k$-neighborhoods, we obtain as particular cases: von Neumann's neighborhood for $k=1$ and Moore's neighborhood for $k=d$. 

Now we are interested in efficient computation of $K(d,k)$. We know that $K(d,1)=2d$ and $K(d,d)=3^d-1$. For $\hat{K}(d,k)$, the following recurrent expression is known (OEIS A013609): 
\begin{equation}\label{eq-ns-rc}
\hat{K}(d,k)=2\hat{K}(d-1,k-1)+\hat{K}(d-1,k)
\end{equation}
combined with $\hat{K}(d,0)=1$ in the original sequence (or $\hat{K}(d,1)=2d$ in our case starting from 1). Taking into consideration the fact that $K(d,k)$ represents partial sums of $\hat{K}(d,k)$ on rows, we write: 
\[ K(d,k)=K(d,k-1)+\hat{K}(d,k), \qquad K(d,1)=\hat{K}(d,1). \]

Thus we can use a combined scheme, sequentially computing $K(d,k)$ after $\hat{K}(d,k)$ for each combination $(d,k)$. To obtain a completely separated scheme, we use (\ref{eq-ns-rc}) and write
\[ K(d,k)=K(d,k-1)+2\hat{K}(d-1,k-1)+\hat{K}(d-1,k). \]

Expressing $\hat{K}(d,k)$ via $K(d,k)$ in the following way
\[ \hat{K}(d,k)=K(d,k)-K(d,k-1) \]
yields 

\begin{equation}\label{eq-ng}
K(d,k)=K(d,k-1)+K(d-1,k)+K(d-1,k-1)-2K(d-1,k-2)).
\end{equation}

\section{$k$-neighborhoods of Radius Greater than One}

The offsets of separate coordinates to neighboring cells are usually equal to either $-$1 or 1, which gives us cells connected with the current cell via some common $(d-k)$-cube. Such a set of direct neighbors defines the mesh (lattice) of connections studied in the previous sections. But sometimes more distant cells, which are separated from the current cell by a hypercube of direct neighbors, influence its behavior. Such an influence is usually specified using a concept of radius \cite{Kari} which extends the standard notion of neighborhood, considered as a neighborhood of radius 1. Sometimes it is of some use to distinguish sharp from usual neighborhoods of definite radius $r$ by applying the equality $L(\vec i, \vec i')=r$, or the inequality $L(\vec i, \vec i')\leq r$, respectively, in the same way as for the $k$-neighborhood discussed above. A sharp neighborhood corresponds to the surface area of the corresponding figure while the usual neighborhood corresponds to its volume. Some difficulty concerns border cells because they are included twice in the usual formula for a hypercube's surface area, $2d(2r+1)^{d-1}$. Usually, the figure of a cell neighborhood defined by some metric is convex. Following the principles outlined in the previous section we will decrement the value excluding from it the current cell. We denote the number of neighbors by capital $R$ in calculations which use parameter $r$ to specify the radius of the neighborhood. Note that the number of neighbors in a standard neighborhood represents the sum of its sharp neighborhoods:
\begin{equation}
R(r)=\sum\limits_{l=1}^{r} \hat{R}(l).\label{eq-nssum}
\end{equation}

Extending Moore's neighborhood for radius $r\geq 1$ to a set of cells situated at Chebyshev distance $r$ gives us a $d$-hypercube of size $2r+1$ with the current cell in its center (which is excluded). Then the total number of neighbors is calculated as 
\begin{eqnarray}
R_{\mathit{Moore}}(d,r)=(2r+1)^d-1,\label{eq-nmrs}\\
\hat{R}_{\mathit{Moore}}(d,r)=\sum\limits_{m=1}^{d} {C_d^m}{2^m}{(2r-1)}^{d-m}\label{eq-nmr}.
\end{eqnarray}
Among the various kinds of extended neighborhood, Moore's is the most spacious. Sometimes it is convenient to think of other neighborhoods as its restrictions. 

Extending von Neumann's neighborhood for radius $r\geq 1$ to a set of cells situated at Manhattan distance $r$ gives us a diamond-shaped neighborhood, illustrated for the 2-dimensional case in Figure~\ref{fig-rdm}. Let us calculate the number of neighbors in this neighborhood. 

\begin{figure}
\center
\includegraphics[width=0.6\textwidth]{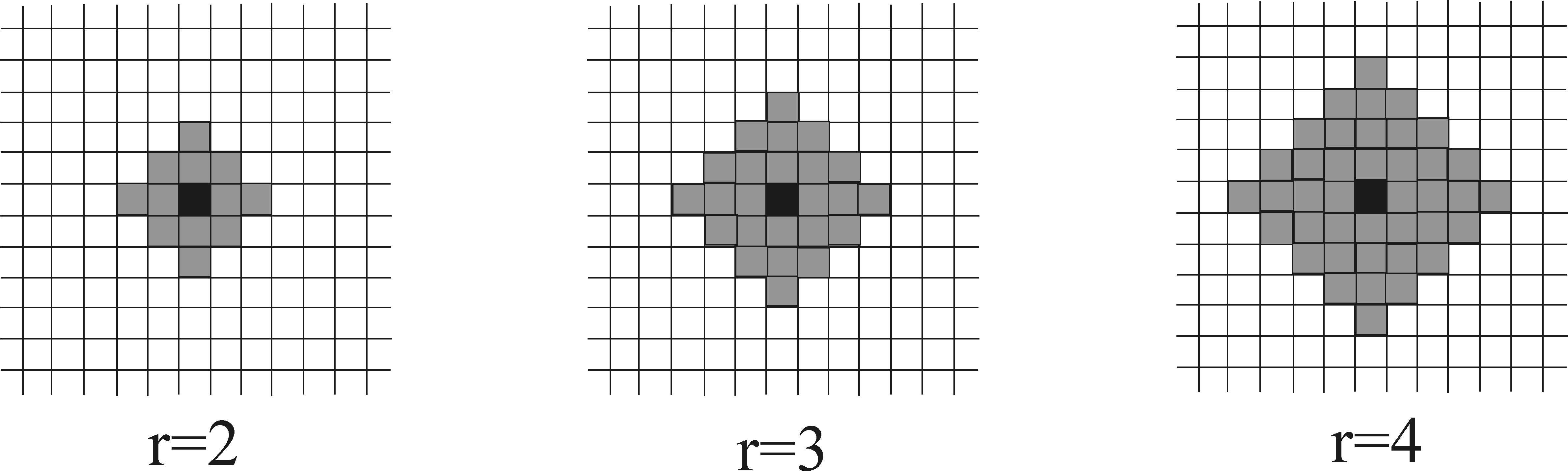}
\caption{Diamond-shaped neighborhood of radius $r$ as a set of cells situated at Manhattan distance $r$.}
\label{fig-rdm}
\end{figure}

\begin{theorem}\label{th-der}
The number of cells situated at sharp radius $r$ in a diamond-shaped neighborhood of a $d$-dimensional lattice is calculated as
$$\hat{R}_{\mathit{diamond}}(d,r)=\sum\limits_{k=1}^{\min(d,r)} C_{r-1}^{k-1} C_d^k 2^k.$$
\end{theorem}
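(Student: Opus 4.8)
The plan is to count directly the coordinate-difference vectors $\Delta\vec{i}=\vec{i'}-\vec{i}$ that realize a cell at sharp Manhattan radius $r$, i.e. the integer vectors satisfying $L^1(\vec{i'},\vec{i})=\sum_{j}\vert\Delta i_j\vert=r$, and to organize the enumeration according to how many coordinates of $\Delta\vec{i}$ are nonzero. Write $k$ for this number. Since $r\geq 1$, at least one coordinate is nonzero, so $k\geq 1$; since there are only $d$ coordinates, $k\leq d$; and since each nonzero coordinate contributes at least $1$ to the sum $r$, we also have $k\leq r$. Hence $k$ ranges over $1\leq k\leq \min(d,r)$, which already explains the summation bounds in the claimed formula.

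For a fixed value of $k$, I would build up the count as a product of three independent choices, matching the three factors $C_d^k$, $2^k$, and $C_{r-1}^{k-1}$ in turn. First, the set of $k$ nonzero coordinates can be chosen among the $d$ available coordinates in $C_d^k$ ways. Second, each selected coordinate may be positive or negative independently, contributing a factor $2^k$ for the sign pattern. Third, the absolute values of the $k$ nonzero coordinates are positive integers whose sum is $r$, so the remaining factor is the number of ordered $k$-tuples of positive integers summing to $r$, i.e. the number of compositions of $r$ into exactly $k$ positive parts.

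The key combinatorial step is this last factor, and it is where I expect to be most careful. The number of ways to write $r$ as an ordered sum $a_1+\cdots+a_k$ with each $a_m\geq 1$ is a standard stars-and-bars count: placing $k-1$ separators into the $r-1$ gaps between $r$ unit stars yields $C_{r-1}^{k-1}$. Multiplying the three factors and summing over the admissible range of $k$ then gives
\[
\hat{R}_{\mathit{diamond}}(d,r)=\sum_{k=1}^{\min(d,r)} C_d^k\, 2^k\, C_{r-1}^{k-1},
\]
as claimed. I do not anticipate a serious obstacle; the only points needing attention are the genuine independence of the three choices (so that their product counts each admissible vector exactly once, with no double counting) and the collapse of the three constraints on $k$ into the single bound $\min(d,r)$. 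As a sanity check, specializing to $r=1$ forces $k=1$ and recovers $\hat{R}_{\mathit{diamond}}(d,1)=C_d^1\,2^1\,C_0^0=2d$, the von Neumann count noted earlier in the excerpt.
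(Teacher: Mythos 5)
Your proposal is correct and follows essentially the same argument as the paper: a three-way combinatorial choice of which $k$ coordinates are nonzero ($C_d^k$), their signs ($2^k$), and a composition of $r$ into $k$ positive parts ($C_{r-1}^{k-1}$ by stars and bars), summed over $1\leq k\leq \min(d,r)$. Your write-up is in fact a bit more careful than the paper's, since you justify the summation bounds and the independence of the three choices explicitly, but the underlying decomposition is identical.
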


\begin{proof}
We organize a combinatorial choice in the following way. We consider sequentially variants having exactly $k$ nonzero coordinates. At first, we choose $k$ coordinates from the total set of $d$ coordinates, which can be done in $C_d^k$ ways. Next, we partition $r$ into a sum of exactly $k$ nonzero numbers, which can be done in $C_{r-1}^{k-1}$ ways; the choice is organized as a distribution of $r$ balls between $k$ boxes \cite{Stanley}. Finally, because the Manhattan distance uses absolute values, we can choose variants of sign distribution over the $k$ nonzero values in $2^k$ ways. Multiplying the various magnitudes and summing up the required function on the number of nonzero elements $k$, we obtain the required expression.  
\end{proof}

\begin{theorem}
The number of cells situated at radius $r$ in a diamond-shaped neighborhood for a $d$-dimensional lattice is calculated as
$$R_{\mathit{diamond}}(d,r)=\sum\limits_{k=1}^{\min(d,r)} C_{r}^{k} C_d^k 2^k;$$
the numbers $D(d,r)=R_{\mathit{diamond}}(d,r)+1$, which are obtained starting the sum from zero, are known as Delannoy numbers \cite{Banderier}.
\end{theorem}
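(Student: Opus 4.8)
The plan is to build the result directly on the two facts already established: relation (\ref{eq-nssum}), which expresses a usual neighborhood as the cumulative sum of its sharp shells, together with the closed form for the sharp shells proved in Theorem~\ref{th-der}. First I would specialise (\ref{eq-nssum}) to the diamond and substitute the formula of Theorem~\ref{th-der}, writing
\[
R_{\mathit{diamond}}(d,r)=\sum_{l=1}^{r}\hat{R}_{\mathit{diamond}}(d,l)=\sum_{l=1}^{r}\sum_{k=1}^{\min(d,l)} C_{l-1}^{k-1} C_d^k 2^k,
\]
so that the whole problem reduces to evaluating this double sum in closed form.

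The key step is to interchange the order of summation. For a fixed number $k$ of nonzero coordinates, the inner index $l$ must satisfy $k\leq l\leq r$ (the lower bound because $\min(d,l)\geq k$ forces $l\geq k$, the upper bound inherited from the outer sum), while $k$ itself ranges from $1$ to $\min(d,r)$. Pulling the factor $C_d^k 2^k$ outside gives
\[
R_{\mathit{diamond}}(d,r)=\sum_{k=1}^{\min(d,r)} C_d^k 2^k \sum_{l=k}^{r} C_{l-1}^{k-1}.
\]
The inner sum is then handled by the hockey-stick identity: reindexing by $j=l-1$ turns it into $\sum_{j=k-1}^{r-1} C_j^{k-1}=C_r^k$. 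Substituting this back yields exactly the asserted formula $R_{\mathit{diamond}}(d,r)=\sum_{k=1}^{\min(d,r)} C_r^k C_d^k 2^k$.

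I expect the main obstacle to be purely one of bookkeeping: getting the summation limits correct when the order is swapped, keeping careful track of the $\min(d,l)$ and $\min(d,r)$ caps, and confirming that the telescoping applies with the shifted index $k-1$ rather than $k$. Once the substitution $j=l-1$ is made explicit, the identity $\sum_{j=k-1}^{r-1} C_j^{k-1}=C_r^k$ is standard and the computation closes without further difficulty.

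For the Delannoy claim I would simply observe that restoring the $k=0$ term contributes $C_r^0 C_d^0 2^0=1$, so that
\[
R_{\mathit{diamond}}(d,r)+1=\sum_{k=0}^{\min(d,r)} C_r^k C_d^k 2^k,
\]
which is precisely the well-known closed form for the Delannoy number $D(d,r)$; hence $D(d,r)=R_{\mathit{diamond}}(d,r)+1$, as required.
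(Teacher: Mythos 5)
Your proposal is correct, but it takes a genuinely different route from the paper. The paper proves this theorem by a direct combinatorial argument: it repeats the stars-and-bars choice from Theorem~\ref{th-der}, but converts the inequality ``Manhattan distance $\leq r$'' into an equality by introducing a dummy box and a dummy ball, so that distributing $r+1$ balls among $k+1$ boxes (the extra box absorbing the slack) yields the factor $C_r^k$ directly, in place of the $C_{r-1}^{k-1}$ of the sharp case. You instead derive the result algebraically: you specialise (\ref{eq-nssum}) to the diamond, substitute the closed form of Theorem~\ref{th-der}, swap the order of summation (with the correct bookkeeping of the $\min(d,l)$ cap, which indeed just forces $l\geq k$), and close the inner sum $\sum_{l=k}^{r}C_{l-1}^{k-1}=C_r^k$ with the hockey-stick identity. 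Both arguments are sound, and each buys something: the paper's proof is self-contained counting and needs neither Theorem~\ref{th-der} as an input nor any binomial identity, whereas your derivation makes the paper's subsequent Corollary (the identity $\sum_{k}C_r^kC_d^k2^k=\sum_{l=1}^{r}\sum_{k}C_{l-1}^{k-1}C_d^k2^k$) completely automatic --- it is literally the computation you performed, read in reverse --- and it avoids the paper's slack-variable argument, whose wording (``the other boxes contain more than 1 ball'') is somewhat delicate. Your handling of the Delannoy identification (restoring the $k=0$ term and citing the known closed form) matches the paper's, which likewise asserts it by reference rather than proving it.
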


\begin{proof}
We organize a combinatorial choice in the way similar to that in the proof of Theorem~\ref{th-der}, but to cover all variants which contain $k$ nonzero components whose sum is equal to or less than $r$, we introduce a dummy box and a dummy ball which are not taken into consideration subsequently. Next, we distribute $(r+1)$ balls between $(k+1)$ boxes. Of these, $k$ boxes correspond to the vector $\vec{i}$ and the final one is a dummy used to allow distributions of all $x\leq r$ balls between $k$ boxes. The dummy box contains a surplus, the other boxes contain more than 1 ball, with the total being less than or equal to $r$ balls. Thus at this stage we obtain $C_r^k$ variants, which yields the required expression (known as Delannoy numbers \cite{Banderier}).
\end{proof}

\textbf{Corollary}. The following equality holds:
$$D(d,r)=\sum\limits_{k=1}^{\min(d,r)} C_{r}^{k} C_d^k 2^k = \sum\limits_{l=1}^{r} \sum\limits_{k=1}^{\min(d,l)} C_{l-1}^{k-1} C_d^k 2^k$$

To prove the Corollary, we use (\ref{eq-nssum}). The expression gives us one more variant for computing Delannoy numbers, showing that they can be decomposed into partial sums of a new useful sequence $\hat{R}_{\mathit{diamond}}(d,r)$ represented with (OEIS A266213). Note that, \cite{Bruck} includes a table showing the size of a diamond-shaped neighborhood of radius $r$ comparing to Moore's neighborhood of radius $r$, but without mentioning Delannoy numbers and without proof. 

Delannoy numbers \cite{Banderier},  sequence (OEIS A008288), are represented as a triangle using its detour on anti-diagonals. Note that they were initially introduced to describe the number of paths in a 2-dimensional rectangular $d\times r$ lattice from its left-bottom corner with coordinates $(0,0)$ to its right-upper corner with coordinates $(d,r)$ using only the three following steps: $(1,0)$, $(0,1)$, $(1,1)$. Delannoy numbers are also known as the tribonacci triangle because of the following efficient recursive scheme for their computation using three previously computed members:

$$D(d,r)=1,~if~d=1~or~r=1,$$
$$D(d,r)= D(d-1,r)+D(d-1,r-1)+D(d,r-1).$$

Note that the same scheme is valid for $\hat{R}_{\mathit{diamond}}(d,r)$ with the following initial conditions: 
$$ \hat{R}_{\mathit{diamond}}(d,0)=1, d\geq 0; \qquad \hat{R}_{\mathit{diamond}}(0,r)=0, r>0 $$

There are also other known ways for extending von Neumann's neighborhood using radius $r > 1$; for example the scheme outlined in \cite{Dutta}. This only allows differences with absolute value equal to $r$ in a single coordinate and is illustrated in Figure~\ref{fig-rnm} for the 2-dimensional case. We will call this extension a narrow von Neumann's neighborhood of radius $r$. It suits our intention to extend the $k$-neighborhood studied in previous section to extended neighborhoods of radius $r>1$. 

\textbf{Definition 3}. A \textit{$k$-neighborhood of radius $r$} is a set of cells having difference $\vert\Delta\vert\leq r$ with respect to the current cell in $j\leq k$ coordinates. When using equalities instead of inequalities we say that the neighborhoods are sharp on $r$ or $k$ (or both), respectively. 

Note that the parameter $k$ allows us to span neighborhoods of radius $r$ from narrow von Neumann's ($k=1$) to Moore's ($k=d$). However, using neighborhoods sharp on $k$ seems unmotivated because it contains gaps.

\begin{figure}
\center
\includegraphics[width=0.6\textwidth]{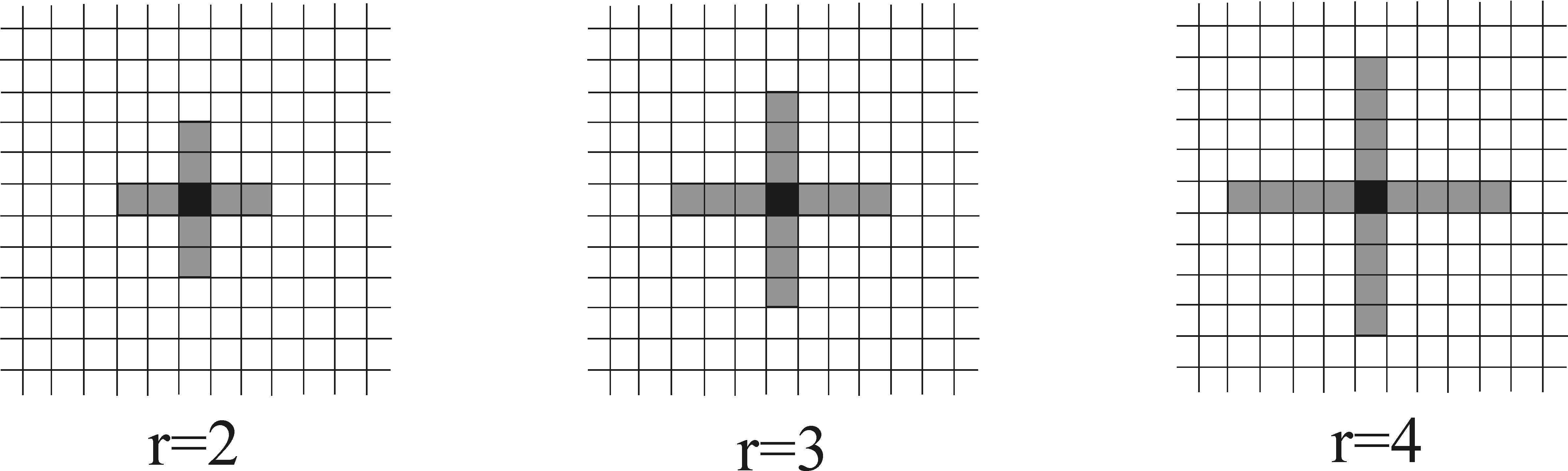}
\caption{Narrow von Neumann's neighborhood of radius $r$.}
\label{fig-rnm}
\end{figure}

A $k$-neighborhood of radius $r$ is the union of all $(d-k)$-cubes of size $(2r+1)$ which have their center in the current cell. However, when calculating the number of neighbors as a sum of cells in $(d-k)$-cubes for all possible combinations of $k$-from-$d$ coordinates, we should exclude intersections of cubes which are summed up multiple times. 

\begin{theorem}
The number of neighbors in a $k$-neighborhood of radius $r$ is calculated using the following expression:
\begin{eqnarray}
R(d,k,r) &=& \sum\limits_{j=0}^{k} C_{d}^{j} (2r)^j \label{eq-dkr1}
\end{eqnarray}
\end{theorem}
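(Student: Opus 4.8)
The plan is to bypass the inclusion–exclusion over the union of cubes (which the paragraph preceding the theorem warns is delicate, since overlapping cubes get summed multiple times) and instead count the admissible difference vectors directly, grouped by their exact support size. By Definition~3, a cell $c_{\vec{i'}}$ lies in the $k$-neighborhood of radius $r$ of $c_{\vec{i}}$ precisely when the difference vector $\Delta\vec{i}=\vec{i'}-\vec{i}$ has at most $k$ nonzero entries, each of absolute value at most $r$. The key observation is that every such vector has a well-defined number $j$ of nonzero coordinates with $0\le j\le k$, so the admissible vectors split into disjoint classes indexed by $j$, and summing the sizes of these classes incurs no double counting.

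First I would fix $j$ and count the vectors with exactly $j$ nonzero coordinates. The positions of the nonzero entries can be chosen in $C_d^{\,j}$ ways. Each of the $j$ selected coordinates must then take a nonzero integer value in the range $\{-r,\dots,-1,1,\dots,r\}$, which contains exactly $2r$ elements; the $j$ choices are independent, contributing a factor $(2r)^j$. Hence the class for a given $j$ has $C_d^{\,j}(2r)^j$ elements. Summing over all admissible support sizes $j=0,1,\dots,k$ yields
\[
R(d,k,r)=\sum_{j=0}^{k} C_d^{\,j}\,(2r)^j,
\]
exactly the claimed expression. The $j=0$ term equals $1$ and accounts for the zero difference vector, i.e. the current cell; and since $C_d^{\,j}=0$ whenever $j>d$, the effective upper limit of summation is $\min(d,k)$.

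The main obstacle is conceptual rather than computational: one must resist the tempting but flawed route of summing cube volumes $C_d^{\,k}(2r+1)^k$ over all $k$-subsets of coordinates, which overcounts every cell whose support is strictly smaller than $k$ (such a cell lies in many of those cubes at once). Organising the count by the \emph{exact} number of nonzero coordinates removes this overlap automatically, making an explicit inclusion–exclusion unnecessary. As a consistency check I would verify the two extreme cases against the formulas already established. For $k=d$ the binomial theorem gives $\sum_{j=0}^{d}C_d^{\,j}(2r)^j=(1+2r)^d=(2r+1)^d$, matching Moore's neighborhood of radius $r$ together with its centre (cf. (\ref{eq-nmrs})); for $k=1$ it gives $1+2rd$, which is the narrow von Neumann neighborhood of radius $r$ (its $2rd$ arm cells) plus the centre. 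Both agree once the single central cell is accounted for.
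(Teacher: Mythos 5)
Your proof is correct and takes essentially the same route as the paper: both arguments count the admissible difference vectors grouped by the exact number $j$ of nonzero coordinates, giving $C_d^j(2r)^j$ per class and summing over $j=0,\dots,k$, which avoids any inclusion--exclusion over overlapping cubes. Your extra remarks (the $j=0$ term counting the centre cell, and the consistency checks $(2r+1)^d$ for $k=d$ and $1+2rd$ for $k=1$) are sound additions but do not change the substance of the argument.
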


\begin{proof}
To prove (\ref{eq-dkr1}) we organize a combinatorial choice on the number $j$ of nonzero components of the coordinates offset vector $\Delta{\vec i}$. For each component, there are $2r$ variants since zero is excluded; thus, we obtain $(2r)^j$ combinations of nonzero coordinates. Next, $j$ nonzero coordinates are chosen from $d$ coordinates in $C_d^j$ ways which give us the corresponding multiplier. 
\end{proof}

\section{Conclusions}

For $d$-dimensional cellular automata, we have introduced $k$-neighborhood based on a parameter $k$ that corresponds to connections of neighboring cells via common $(d-k)$-cubes. We distinguish sharp and usual neighborhoods, according to whether equality of inequality on the parameter $k$ is considered. Usual $k$-neighborhoods span from von Neumann's neighborhood ($k=1$) to Moore's neighborhood ($k=d$). This concept is useful for adjusting the density of neighbors, which increases with increasing $k$. The $k$-neighborhood was extended using the concept of radius $r$ greater than one. We studied both von Neumann's neighborhoods of radius $r$: the narrow corresponding to $k=1$ and the diamond-shaped corresponding to a set of cells situated at Manhattan distance $r$. Calculating the number of neighbors for the sharp case provided a new expression which partial sums equal Delannoy numbers. Two new sequences (OEIS A265014) and (OEIS A266213) have been approved by OEIS \cite{OEIS} and implemented in software \cite{Z15hmn}.


\end{document}